\newcommand{\real}{\mathbb{R}}
\newcommand{\interval}{F}
\newcommand{\Cpw}{\mathcal{C}_{\mathrm{pw}}^1}
\newcommand{\Lip}{\mathcal{L}}
\newcommand{\Cdiff}{\mathcal{C}}
\newcommand{\ima}{\operatorname{Im}}
\newcommand{\Int}{\operatorname{Int}}
\newcommand{\krel}{d^+}
\newcommand{\Krel}{K^+}
\newcommand{\kurv}{\tilde{d}^+}
\newcommand{\Kurv}{\tilde{K}^+}
\newcommand{\ICpw}{\check{I}^+}
\newcommand{\ICpwp}{\check{I}^+}
\newcommand{\ICpwpm}{\check{I}^\pm}
\newcommand{\ICpwm}{\check{I}^-}
\newcommand{\ILip}{{I}^+}
\newcommand{\Ialt}{\mathcal{I}^+}
\newcommand{\Ialtm}{\mathcal{I}^-}
\newcommand{\Ialtpm}{\mathcal{I}^\pm}
\newcommand{\cmark}{\ding{51}}
\newcommand{\xmark}{\ding{55}}
\theoremstyle{plain}
\newtheorem{thm}{Theorem}[section]
\newtheorem{conj[thm]}{Conjecture}
\newtheorem{lem}[thm]{Lemma}
\newtheorem{prop}[thm]{Proposition}
\theoremstyle{definition}
\newtheorem{defn}[thm]{Definition}
\newtheorem{ex}[thm]{Example}
\newtheorem{rem}[thm]{Remark}
\title{Causality theory of spacetimes with continuous Lorentzian metrics revisited}
\author{Leonardo Garc\'ia-Heveling
 \thanks{Department of Mathematics, Radboud University Nijmegen, The Netherlands. \\ \textit{Email:} \tt{l.heveling@math.ru.nl} \\ \textit{Acknowledgements:} I am very grateful to Annegret Burtscher for discussions and detailed comments on the draft. I also wish to thank Klaas Landsman, Ettore Minguzzi and two anonymous referees for further comments.}}
\date{}
\begin{document}

\maketitle

\begin{abstract}
 We consider the usual causal structure $(I^+,J^+)$ on a spacetime, and a number of alternatives based on Minguzzi's $D^+$ and Sorkin and Woolgar's $K^+$, in the case where the spacetime metric is continuous, but not necessarily smooth. We compare the different causal structures based on three key properties, namely the validity of the push-up lemma, the openness of chronological futures, and the existence of limit causal curves. Recall that if the spacetime metric is smooth, $(I^+,J^+)$ satisfies all three properties, but that in the continuous case, the push-up lemma fails. Among the proposed alternative causal structures, there is one that satisfies push-up and open futures, and one that has open futures and limit curves. Furthermore, we show that spacetimes with continuous metrics do not, in general, admit a causal structure satisfying all three properties at once.\\
 \textit{Key words:} low regularity, causality theory, push-up lemma, causal bubbles.\\
 \textit{MSC-classification:} 53C50 (Primary) 83C75 (Secondary)
\end{abstract}

\section{Introduction}

The study of spacetimes with metrics of low regularity is a topic of rising importance in Lorentzian geometry. The main motivation stems from the strong cosmic censorship conjecture \cite{DaLu,Sbi} and the occurrence of weak solutions to Einstein's equations coupled to certain matter models \cite{BuLF,GeTr}. It has hence become an important research question to establish which properties of the usual, smooth spacetimes are more ``robust'' or ``fundamental'', in the sense that they continue to hold in lower regularity, and which, on the other hand, depend sensibly on the smoothness assumption. In trying to answer this question, the need arises to axiomatize the notion of spacetime, and in particular, to treat the causal structure in an order-theoretic way. This, in turn, connects well with ideas in quantum gravity, such as causal set theory \cite{BLMS}. To make matters more concrete, in the present paper we shall study spacetimes $(M,g)$ where $g$ is a continuous Lorentzian metric. However, since our approach is indeed of the order-theoretic type, it can easily be adapted to other settings.

Let us start by recalling the case of a classical spacetime $(M,g)$ where $g$ is smooth. The chronological and causal relations $I^+$ and $J^+$ can then be defined using the notions of timelike and non-spacelike curve respectively. The three following facts are well-known:
\begin{enumerate}
 \item The push-up lemma: if $p \in I^+(q)$ and $q \in J^+(r)$ then $p \in I^+(r)$.
 \item The limit curve theorem: the uniform limit of a converging sequence of causal curves is a causal curve.
 \item Openness of chronological pasts and futures: the sets $I^\pm(p)$ are open, for any $p \in M$.
\end{enumerate}
With these three results at hand, one can develop a large portion of causality theory, such as the causal ladder and the characterization of time functions, without ever again mentioning the metric $g$ or the manifold structure on $M$ explicitly. This is confirmed by the ``Lorentzian length spaces'' approach of Kunzinger and S\"{a}mann \cite{KuSa} and follow-up work \cite{ACS,Leo}.

In order to do causality theory on a spacetime with a $\Cdiff^0$-metric, the first question is how the causal structure should even be defined. The obvious answer is to define $I^+$ and $J^+$ through timelike and non-spacelike curves, just as in the smooth case. However, there are two potential problems:
\begin{enumerate}[label=\Alph*.]
 \item Points where the metric is not $\Cdiff^2$ do not admit normal neighborhoods.
 \item The regularity class where we define timelike curves becomes important.
\end{enumerate}
Chru\'{s}ciel and Grant \cite{ChGr} showed that because of A, the push-up lemma fails, while the limit curve theorems are unaffected (points 1 and 2 above). As a consequence, spacetimes with continuous metric exhibit so called ``causal bubbles'', open regions contained in $J^+$ but not in $I^+$. Regarding B, when the metric is at least $\Cdiff^2$, it was shown by Chru\'{s}ciel \cite{Chr} that one obtains the same chronological relation $I^+$ regardless of whether timelike curves are required to be Lipschitz or piecewise-differentiable. In the case of continuous metrics, however, it was shown by Grant et al. \cite{GKSS} that this choice makes an important difference. In particular, they showed that the chronological futures and pasts are open when using piecewise-differentiable curves, but not when using Lipschitz curves (point 3 above).

A radically different, and in fact earlier, approach is that of Sorkin and Woolgar \cite{SoWo}. They propose to keep the definition of $I^+$ by piecewise-differentiable timelike curves, and then introduce another relation $K^+$ as the smallest transitive, topologically closed relation containing $I^+$. The relation $K^+$ can then be used to replace $J^+$. Even for smooth metrics, the two relations $K^+$ and $J^+$ do not coincide. Nonetheless, it is possible to define the usual causal curves (and hence $J^+$) in terms of $K^+$, without referring to the metric directly. The $K^+$-relation has since found a variety of applications, most notably Minguzzi's works on stable causality \cite{Minstab} and time functions \cite{Minutil}. However, there is no push-up lemma for the $K^+$-relation.

Following a similar philosophy, one can define a relation $\krel$ as the largest relation such that the push-up lemma holds true. On spacetimes with $\Cdiff^2$-metrics, it was shown by Minguzzi \cite{MinD} that such a maximal relation exists, and various characterizations of it are given, using the name $D^+$. We will discuss which of these characterizations are valid for continuous metrics (not all of them, hence the change in notation). The $D^+$-relation was used by Minguzzi in order to characterize the causality condition known as weak distinction: it is shown that a spacetime is weakly distinguishing if and only if $(I^+,D^+)$ is a causal structure in the sense of Kronheimer and Penrose \cite{KrPe}. This result also holds in the $\Cdiff^0$-setting for $d^+$, as we will see in detail. Further, we propose a definition of causal curve based on $\krel$, similar to Sorkin and Woolgar's $\Krel$-causal curves. When the metric is smooth, causal curves defined through $\krel$ coincide with those defined by the metric $g$, but when the metric is merely continuous, they do not. As a consequence, we show that while our new causal relation satisfies the push-up lemma, the limit curve theorems cease to hold. We then argue that, essentially because we chose $\krel$ to be maximal, there in fact do not exist any causal relations that can satisfy both properties at the same time. That is, at least, if we want to keep the usual definition of (piecewise continuously differentiable) timelike curve, the only one that guarantees open futures. We also explore the possibility of alternative chronological relations, but we conclude that one runs into the same problems.

We conclude that spacetimes with continuous metrics are unavoidably pathological. This strengthens the view, already present in the literature, that one should focus on a special class of continuous metrics, the so-called causally plain ones. These are the ones where the usual push-up lemma holds, and include, for example, the class of locally Lipschitz metrics, but not the class of H\"older continuous metrics \cite[Thm. 1.20]{ChGr}. In this sense, our paper can be seen to support a $\Cdiff^{0,1}_\mathrm{loc}$-formulation of strong cosmic censorship, such as in Sbierski's recent work \cite{Sbi2}. Moreover, the methods and results of this paper are also highly relevant to the study of axiomatic causality relations in other settings; for example, in the study of spacetimes with degenerate metrics \cite{DGS}, or with metrics that are not even continuous \cite{GeTr}.

\textbf{Outline.} In Section \ref{seckrel} we provide more background, define the new causal relation $\krel$ and study its properties. In Section \ref{seckurv} we define causal curves in terms of $\krel$, and show that these are just the usual causal curves when the metric is smooth. In Section \ref{other}, we discuss other possible choices of causal structure. In Section \ref{conc}, we summarize and discuss our results.

\section{The $\krel$-relation} \label{seckrel}

\subsection{Basic notions in causality theory} \label{basicdefs}

Let $M$ denote a Hausdorff, paracompact $\Cdiff^1$-manifold, and $g$ a $\Cdiff^0$-Lorentzian metric. Assume that $(M,g)$ admits a $\Cdiff^0$-vector field $X$ such that $g(X,X)<0$, called a time orientation. The pair $(M,g)$ together with a choice of time orientation is called a \emph{$\Cdiff^0$-spacetime}, following the nomenclature of Ling \cite{Ling}. Whenever we say that $g$ is $\Cdiff^2$ (or smooth), or that $(M,g)$ is a $\Cdiff^2$ (or smooth) spacetime, we mean that $M$ admits a $\Cdiff^3$ (resp. smooth) subatlas such that $g$ is $\Cdiff^2$ (resp. smooth) in this subatlas. By \emph{relation} on $M$ we will mean a subset of $M \times M$. The closure of a relation $R$, denoted by $\overline{R}$, is the topological closure in the product topology on $M \times M$. Likewise, we say that $R$ is open if it is an open subset of $M \times M$.

There exist two different definitions for the notion of \emph{timelike curve}, and thus two different notions of \emph{chronological relation}. For $\Cdiff^2$-metrics the two notions are equivalent \cite[Cor.\ 2.4.11]{Chr}, but not for $\Cdiff^0$-metrics \cite{GKSS}. One is based on the class $\Lip$ of locally Lipschitz curves, and the other one on the class $\Cpw$ of piecewise continuously differentiable curves:
\begin{align*}
 \ILip := \{ (p,q) \in M \times M \mid \ &\text{there exists an $\Lip$-curve $\gamma : [a,b] \to M$ such that $\gamma(a)=p$, $\gamma(b) = q$,} \\ & \text{and $g(\dot{\gamma},\dot{\gamma}) < 0$, $g(\dot{\gamma},X) < 0$ almost everywhere} \}, \\
 \ICpw := \{ (p,q) \in M \times M \mid \ &\text{there exists a $\Cpw$-curve $\gamma : [a,b] \to M$ such that $\gamma(a)=p$, $\gamma(b) = q$,} \\ & \text{and $g(\dot{\gamma},\dot{\gamma}) < 0$, $g(\dot{\gamma},X) < 0$ everywhere} \}.
\end{align*}
Recall that an $\Lip$-curve is differentiable almost everywhere by Rademacher's theorem. In the second case, when $\gamma$ is $\Cpw$, the condition $g(\dot{\gamma},\dot{\gamma}) < 0$ is meant to hold for both one-sided derivatives (which may differ from each other at break points). It was established by Grant et al. that $\ICpw$ is open, but $\ILip$ is not necessarily open \cite{GKSS}. The standard \emph{$g$-causal relation} is defined as
\begin{align*}
 J^+ := \{ (p,q) \in M \times M \mid \ &\text{there exists an $\Lip$-curve $\gamma : [a,b] \to M$ such that $\gamma(a)=p$, $\gamma(b) = q$,} \\ & \text{and $g(\dot{\gamma},\dot{\gamma}) \leq 0$, $g(\dot{\gamma},X) < 0$ almost everywhere} \},
\end{align*}
where we say that $\gamma$ is a \emph{$g$-causal curve}. We can analogously define the past relations $I^-$, $\check{I}^-$ and $J^-$ by requiring $g(\dot{\gamma},X) > 0$, but since they are simply given by reversing the factors, there is no need to treat them separately. Therefore we also shall not specify every time that our timelike and causal curves are always future-directed. We do, however, sometimes use the notations $q \in J^+(p)$, and $p \in J^-(q)$, both meaning the same, namely $(p,q) \in J^+$. As mentioned in the introduction, Sorkin and Woolgar suggested an order-theoretic alternative $\Krel$ to $J^+$.

\begin{defn}[{\cite[Def.\ 8]{SoWo}}]
 The relation $\Krel$ is the smallest closed, transitive relation containing $\ICpw$.
\end{defn}

We finish this subsection with a short digression about limit curve theorems. In the literature, there exist multiple statements with this name; a detailed review can be found in \cite{Minlimi} for smooth spacetimes. In \cite[Thm.\ 1.6]{ChGr} it is shown how the limit curve theorems for the smooth case also carry over to continuous spacetimes (see also \cite[Thm.\ 1.5]{Sae}). Rougly speaking, a limit curve theorem is the combination of the following two statements:
\begin{enumerate}
 \item Under certain assumptions, a sequence of causal curves has a convergent (in some appropriate sense) subsequence.
 \item The limit of said subsequence is itself a causal curve.
\end{enumerate}
Here causal usually means $g$-causal, but we will also discuss alternative notions of causal curve. Regarding part 1, there exist many versions tailored to different applications. A common variation is to require the curves to be Lipschitz (as we did in our definition of $J^+$) and add some compactness assumptions in order to apply the Arzel\`{a}--Ascoli theorem. Part 2 is where the causal structure becomes important; we discuss it in the context of our new $\krel$-relation in Remark \ref{limcurvthm}.

\subsection{Definition of $\krel$}

We first introduce some nomenclature, the underlying concepts being fairly standard. By $(M,g)$ we continue to denote a $\Cdiff^0$-spacetime, although some of the ideas make sense even if $M$ is just a set. The following definition gives a compatibility condition between the chronological and causal relations, in this case denoted abstractly by $R$ and $S$ respectively.
\begin{defn} \label{pushup}
 Let $R,S \subseteq M \times M$ be two relations. We say that $S$ satisfies \emph{push-up relative to $R$} (or that \emph{$R$ is an $S$-ideal}, cf.\ \cite{Min,MinD}) if the following two properties hold:
 \begin{enumerate}[label=(\roman*)]
  \item $(x,y) \in S, (y,z) \in R \implies (x,z) \in R$,
  \item $(w,x) \in R, (x,y) \in S \implies (w,y) \in R$.
 \end{enumerate}
\end{defn}
Let $R,S,S' \subseteq M \times M$ be relations. Then it is easy to see that:
 \begin{enumerate}[label=(\alph*)]
  \item If $R$ is transitive, then $R$ satisfies push-up relative to itself.
  \item If $S$ satisfies push-up relative to $R$, and $S' \subseteq S$, then also $S'$ satisfies push-up relative to $R$.
  \item If $S$ and $S'$ each satisfy push-up relative to $R$, then so does $S \cup S'$.
 \end{enumerate}
If $(M,g)$ is $\Cdiff^2$, then $J^+$ satisfies push-up relative to $\ILip$ (and equivalently, $\ICpw$). This fact is known as the push-up lemma \cite[Lem.\ 2.4.14]{Chr}. Those $\Cdiff^0$-spacetimes where $J^+$ satisfies push-up relative $\ICpw$ are called \emph{causally plain}. The term was coined by Chru\'{s}ciel and Grant \cite[Def.\ 1.16]{ChGr}, but beware that they used $\ILip$ in place of $\ICpw$. In any case, not all $\Cdiff^0$-spacetimes are causally plain \cite[Ex.\ 1.11]{ChGr}. The failure of the push-up lemma on arbitrary $\Cdiff^0$-spacetimes motivates our next, central definition.

\begin{defn} \label{kdef}
 The \emph{$\krel$-relation} is the largest relation that satisfies push-up relative to $\ICpw$.
\end{defn}

\begin{prop} \label{welldef}
 There exists a unique relation $\krel$ satisfying Definition \ref{kdef}. Moreover, $\krel$ is transitive and reflexive.
\end{prop}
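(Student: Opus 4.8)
The plan is to construct $\krel$ explicitly as the union of all relations that satisfy push-up relative to $\ICpw$, and then verify the claimed properties directly. Concretely, set
\[
 \krel := \bigcup \{ S \subseteq M \times M \mid S \text{ satisfies push-up relative to } \ICpw \}.
\]
This is a well-defined subset of $M \times M$ since it is a union of subsets of $M \times M$. The first task is to check that $\krel$ itself satisfies push-up relative to $\ICpw$, so that it is a \emph{maximum} and not merely a supremum; maximality (``largest'') is then automatic, as is uniqueness, since any two largest such relations must contain each other. For this one must show that the family of relations satisfying push-up relative to $\ICpw$ is closed under arbitrary unions. The binary-union case is already stated in the excerpt as property (c); the only point to check is that the same argument goes through for an arbitrary union. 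This is where I expect the only real (though still routine) work: given $(x,y) \in \krel$ and $(y,z) \in \ICpw$, we have $(x,y) \in S$ for some $S$ in the family, whence $(x,z) \in \ICpw$ by property (i) for $S$; property (ii) is symmetric. So $\krel$ lies in the family, hence is its largest element.

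Next I would verify reflexivity. Since the family is nonempty — for instance the empty relation, or better the diagonal $\Delta = \{(x,x) : x \in M\}$, satisfies push-up relative to $\ICpw$ trivially (if $(x,x) \in \Delta$ and $(x,z) \in \ICpw$ then $(x,z) \in \ICpw$, and symmetrically) — we get $\Delta \subseteq \krel$, i.e.\ $\krel$ is reflexive.

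Finally, transitivity. The cleanest route is to show that if $S$ satisfies push-up relative to $\ICpw$, then so does its transitive closure $S^{\mathrm{tr}} = \bigcup_{n \ge 1} S^{\circ n}$ (composition of $S$ with itself $n$ times). Indeed, if $(x,y) \in S^{\circ n}$ and $(y,z) \in \ICpw$, then writing $(x,y) = (x_0, x_n)$ with consecutive pairs $(x_{i-1},x_i) \in S$, one applies property (i) of $S$ repeatedly — from $(x_{n-1},x_n) \in S$ and $(x_n, z) \in \ICpw$ get $(x_{n-1},z) \in \ICpw$, then from $(x_{n-2},x_{n-1}) \in S$ get $(x_{n-2},z) \in \ICpw$, and so on down to $(x_0,z) \in \ICpw$; property (ii) is handled symmetrically by pushing from the left. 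Hence $S^{\mathrm{tr}}$ is again in the family, so $S^{\mathrm{tr}} \subseteq \krel$ whenever $S \subseteq \krel$; applying this to $S = \krel$ gives $\krel^{\mathrm{tr}} \subseteq \krel$, i.e.\ $\krel$ is transitive. (Alternatively, one could note $\krel \cup \krel^{\mathrm{tr}} = \krel^{\mathrm{tr}}$ is in the family by the union property and maximality, reaching the same conclusion.) Assembling these three points — $\krel$ is in the family and is its maximum, it contains $\Delta$, and it equals its own transitive closure — completes the proof; the main obstacle is simply being careful that the elementary union and transitive-closure closure arguments for the push-up property go through verbatim, which they do.
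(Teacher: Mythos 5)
Your proposal is correct and follows essentially the same route as the paper: define $\krel$ as the union of all relations satisfying push-up relative to $\ICpw$, note the diagonal lies in this family to get reflexivity, and obtain transitivity by showing the transitive closure (finite chains) still satisfies push-up and invoking maximality. The only difference is that you spell out the arbitrary-union closure and uniqueness explicitly, which the paper leaves implicit; the substance is identical.
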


\begin{proof}
 We construct $\krel$ by setting $(x,y) \in \krel$ if:
 \begin{enumerate}[label=(\roman*)]
  \item $(y,z) \in \ICpw \implies (x,z) \in \ICpw$ (in other words, $\ICpwp(y) \subseteq \ICpwp(x)$),
  \item $(w,x) \in \ICpw \implies (w,y) \in \ICpw$ (in other words, $\ICpwm(x) \subseteq \ICpwm(y)$).
 \end{enumerate}
 Clearly, this relation is maximal such that Definition \ref{pushup} is satisfied. Moreover, for any $p \in M$, $(p,p)$ satisfies (i) and (ii), meaning that $\krel$ is reflexive. To show transitivity, assume $(x,y) \in \krel$ and $(y,z) \in \krel$. Then $\ICpwp(z) \subseteq \ICpwp(y) \subseteq \ICpwp(x)$ and $\ICpwm(x) \subseteq \ICpwm(y) \subseteq \ICpwm(z)$ , hence $(x,z) \in \krel$.
\end{proof}
 
For spacetimes with $\Cdiff^2$-metrics, the existence of such a maximal relation (i.e., of $\krel$ according to our notation) was already noted by Minguzzi \cite{MinD}, who defined
\begin{equation} \label{defD}
  D^+ := \left\{ (p,q) \in M \times M \mid q \in \overline{\ICpwp}(p), \ p \in \overline{\ICpwm}(q) \right\},
 \end{equation}
maximality with respect to push-up being a consequence (rather than the defining property) in this case \cite[Lem.\ 2.8]{MinD}. In the $\Cdiff^2$- (but \emph{not} in the $\Cdiff^0$-) case, \eqref{defD} is equal to
\[
 \left\{ (p,q) \in M \times M \mid q \in \overline{J^+}(p), \ p \in \overline{J^-}(q) \right\},
\]
which is subsequently taken to be the definition in the review article \cite[Sec.\ 4.1]{Min}. Note also that in the $\Cdiff^2$-case, the distinction between $\ILip$ and $\ICpw$ is irrelevant. To avoid possible confusion, we adopt the lower-case notation $\krel$, although we now prove that $\krel = D^+$ also on $\Cdiff^0$-spacetimes.

\begin{lem} \label{kinI}
 For $D^+$ as given by \eqref{defD}, it holds that $\krel = D^+$. In particular, $\krel \subseteq \overline{\ICpw}$.
\end{lem}

\begin{proof}
 Suppose $(p,q) \in \krel$. Let $\gamma : [0,1) \to M$ be any timelike curve with $\gamma(0) = q$. Then for all $t \in (0,1)$, $\left(q,\gamma(t)\right) \in \ICpw$. Because $(p,q) \in \krel$, the push-up property implies $\left(p,\gamma(t)\right) \in \ICpw$. Since $\gamma$ is continuous, $(p,\gamma(t)) \to (p,q)$ as $t \to 0$, hence $q \in \overline{\ICpwp}(p)$. Similarly, $p \in \overline{\ICpwm(q)}$, so we conclude that $(p,q) \in D^+$ and hence $\krel \subseteq D^+$.
 
 On the other hand, suppose that $(p,q) \in D^+$, and let $r \in \ICpwp(q)$. By assumption, $q \in \overline{\ICpwp}(p)$, so we can approximate $q$ by a sequence $(q_i)_i$ in $\ICpwp(p)$. By openness of $\ICpw$, $r \in \ICpwp(q_i)$ for some large enough $i$. Then, by transitivity, we conclude that $r \in \ICpwp(p)$. Since $r$ was arbitrary, we conclude that $\ICpwp(q) \subseteq \ICpwp(p)$. Similarly, $\ICpwm(p) \subseteq \ICpwm(q)$, so $D^+$ satisfies push-up with respect to $\ICpw$, and hence $D^+ \subseteq d^+$.
 
 Finally, note that $D^+ \subseteq \overline{\ICpw}$ as a direct consequence of \eqref{defD}.
\end{proof}

We end this subsection by showing how the $\krel$-relation fits into the current literature. A $\Cdiff^0$-spacetime $(M,g)$ is said to be \emph{weakly distinguishing} whenever, for all $p,q \in M$, $\ICpwp(p) = \ICpwp(q)$ and $\ICpwm(p) = \ICpwm(q)$ together imply $p=q$ \cite[Def.\ 4.47]{Min}. Given two relations $R,S$ on $(M,g)$ (or any set, for that matter), we say that the pair $(R,S)$ is a \emph{causal structure} in the sense of Kronheimer and Penrose \cite[Def.\ 1.2]{KrPe} if:
\begin{enumerate}[label=(\roman*)]
 \item $S$ is transitive, reflexive and antisymmetric,
 \item $R$ is contained in $S$ and irreflexive,
 \item $S$ satisfies push-up relative to $R$.
\end{enumerate}
The following proposition was already shown by Minguzzi for $\Cdiff^2$-spacetimes (phrased in terms of $D^+$ and with a slightly different proof, see \cite[Thm.\ 4.49]{Min} or \cite[Lem.\ 2.7]{MinD}).

\begin{prop} \label{KP}
 Let $(M,g)$ be a $\Cdiff^0$-spacetime. Then $(\ICpw,\krel)$ is a causal structure in the sense of Kronheimer and Penrose if and only if $(M,g)$ is weakly distinguishing.
\end{prop}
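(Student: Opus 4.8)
The plan is to check the three Kronheimer--Penrose axioms for $(\ICpw,\krel)$ one by one and to pin down exactly where the weak distinguishing hypothesis enters. First I would record a fact valid on any $\Cdiff^0$-spacetime, namely $\ICpw\subseteq\krel$: since the concatenation of two $\Cpw$-timelike curves is again $\Cpw$-timelike (the one-sided derivatives at the junction are still timelike and future-directed), $\ICpw$ is transitive, hence by observation (a) it satisfies push-up relative to itself, hence it lies in $\mathcal{S}$ and is contained in the maximal such relation $\krel$.

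For the ``if'' direction I would assume $(M,g)$ weakly distinguishing. Transitivity and reflexivity of $\krel$, together with the push-up property (axiom (iii)), are immediate from Proposition \ref{welldef} and Definition \ref{kdef}, and $\ICpw\subseteq\krel$ has just been noted. So the real content is (1) antisymmetry of $\krel$ and (2) irreflexivity of $\ICpw$. For (1): if $(p,q),(q,p)\in\krel$, then feeding these pairs into the two clauses of ``push-up relative to $\ICpw$'' one obtains $\ICpw(p)=\ICpw(q)$ and $\ICpwm(p)=\ICpwm(q)$, so weak distinction gives $p=q$. For (2): if $\ICpw$ were not irreflexive there would be a closed $\Cpw$-timelike curve through some $p$; since such a curve is non-constant it passes through some $q\neq p$, and splitting the loop at $q$ shows $(p,q),(q,p)\in\ICpw$, so by transitivity of $\ICpw$ again $\ICpw(p)=\ICpw(q)$ and $\ICpwm(p)=\ICpwm(q)$, contradicting weak distinction since $p\neq q$.

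For the ``only if'' direction I would assume $(\ICpw,\krel)$ is a causal structure and use only that $\krel$ is antisymmetric. Given $p,q$ with $\ICpw(p)=\ICpw(q)$ and $\ICpwm(p)=\ICpwm(q)$, the idea is to enlarge $\krel$ by the two pairs $(p,q)$ and $(q,p)$: a short case analysis — using precisely these two equalities — shows that $\{(p,q),(q,p)\}$ satisfies push-up relative to $\ICpw$, hence so does $\krel\cup\{(p,q),(q,p)\}$ by observation (c). Maximality of $\krel$ forces $(p,q),(q,p)\in\krel$, and antisymmetry then yields $p=q$.

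I do not anticipate a genuine obstacle: the argument is bookkeeping with Definition \ref{pushup} and the elementary closure properties (a)--(c). The one spot needing care is the irreflexivity step, where one must not take for granted that $\ICpw$ is irreflexive — closed timelike curves are a priori allowed by a $\Cdiff^0$-metric — and must genuinely invoke weak distinction to exclude them; one should also be careful throughout to keep the future and past clauses of push-up correctly paired.
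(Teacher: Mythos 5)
Your proposal is correct and takes essentially the same route as the paper: your two steps (reading the push-up clauses off $(p,q),(q,p)\in\krel$, and adjoining the singleton $\{(p,q),(q,p)\}$ and invoking maximality) are exactly the paper's characterization that $(p,q)\in\krel$ if and only if $\ICpwp(q)\subseteq\ICpwp(p)$ and $\ICpwm(p)\subseteq\ICpwm(q)$, from which antisymmetry is equated with weak distinction. The only (harmless) difference is that you prove the step ``weakly distinguishing $\Rightarrow$ $\ICpw$ irreflexive'' by splitting a closed timelike loop, whereas the paper simply cites this as the known fact that weakly distinguishing spacetimes are chronological.
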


\begin{proof}
 Point (iii) is satisfied by the very definition of $\krel$. To see point (ii), recall that if $M$ is weakly distinguishing, then $M$ is chronological, i.e., it does not contain closed timelike curves (and hence $\ICpw$ is irreflexive). Indeed, suppose $\gamma : [a,b] \to M$ is a closed timelike curve. Then, by transitivity, $\ICpwpm(\gamma(t)) = \ICpwpm(\gamma(s))$ for all $s,t \in [a,b]$. On the other hand, $\gamma$ must be non-constant, in contradiction to weak distinction. That $\ICpw$ is contained in $\krel$ is clear because $\ICpw$, being transitive, must satisfy push-up with respect to itself . Regarding point (i), since $\krel$ is always transitive and reflexive by Lemma \ref{welldef}, it only remains to show that $\krel$ is antisymmetric.
 
 By the proof of Lemma \ref{welldef}, $(p,q) \in \krel$ if and only if $\ICpwp(q) \subseteq \ICpwp(p)$ and $\ICpwm(p) \subseteq \ICpwm(q)$. Hence, $(p,q)\in\krel$ and $(q,p) \in \krel$ if and only if $\ICpwp(p) = \ICpwp(q)$ and $\ICpwm(p) = \ICpwm(q)$. But $p=q$ for all such pairs $(p,q)$ if and only if $(M,g)$ is weakly distinguishing.
\end{proof}

Another way of phrasing the last result in the usual language of causality theory is to say that ``$(M,g)$ is $\krel$-causal if and only if it is weakly distinguishing''.

\subsection{The local structure of $\krel$}

Given a neighborhood $U \subseteq M$, we can define the localized relations $\ICpw_U$, $J^+_U$ and $\krel_U$  by applying the usual definitions to the spacetime $(U,g\vert_U)$. It is easy to see that $\ICpw_U \subseteq \ICpw $ and $J^+_U \subseteq J^+$. 

\begin{lem} \label{kUink}
 Let $U \subseteq M$ be an open neighborhood. Then $\krel_U \subseteq \krel$.
\end{lem}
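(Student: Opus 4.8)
The plan is to leverage the maximality built into Definition~\ref{kdef}. Proposition~\ref{welldef}, applied to the $\Cdiff^0$-spacetime $(U,g\vert_U)$, shows that $\krel_U \subseteq U\times U$ is a well-defined relation; viewing it as a subset of $M\times M$ through the inclusion $U\times U\subseteq M\times M$, it then suffices --- since $\krel$ is by definition the \emph{largest} relation on $M$ satisfying push-up relative to $\ICpw$ --- to check that $\krel_U$ satisfies push-up relative to $\ICpw$ itself, rather than only relative to the smaller relation $\ICpw_U$ which is all that its own definition provides.

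To verify property (i) of Definition~\ref{pushup}, I would take $(x,y)\in\krel_U$ and $(y,z)\in\ICpw$, the latter witnessed by a $\Cpw$ future-directed timelike curve $\gamma\colon[a,b]\to M$ from $y$ to $z$. Since $U$ is open and $y=\gamma(a)\in U$, continuity of $\gamma$ gives some $\varepsilon\in(0,b-a)$ with $\gamma([a,a+\varepsilon])\subseteq U$, so that $(y,\gamma(a+\varepsilon))\in\ICpw_U$. As $x,y,\gamma(a+\varepsilon)\in U$, push-up of $\krel_U$ relative to $\ICpw_U$ then yields $(x,\gamma(a+\varepsilon))\in\ICpw_U\subseteq\ICpw$. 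Concatenating a $\Cpw$ future-directed timelike curve from $x$ to $\gamma(a+\varepsilon)$ with $\gamma\vert_{[a+\varepsilon,b]}$ produces a $\Cpw$ future-directed timelike curve from $x$ to $z$ (at the junction the two one-sided derivatives are both future-directed timelike), so $(x,z)\in\ICpw$. Property (ii) follows symmetrically, truncating $\gamma$ near its \emph{final} point $y$ instead: if $(x,y)\in\ICpw$ via $\gamma\colon[a,b]\to M$ and $(y,z)\in\krel_U$, pick $\varepsilon$ with $\gamma([b-\varepsilon,b])\subseteq U$, obtain $(\gamma(b-\varepsilon),z)\in\ICpw_U\subseteq\ICpw$ from push-up inside $U$, and re-glue with $\gamma\vert_{[a,b-\varepsilon]}$.

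The only genuine obstacle is the mismatch of ambient spaces: a $\Cpw$ timelike curve of $M$ passing through a point of $U$ need not stay inside $U$, so it cannot be handed directly to the push-up property available for $\krel_U$. The remedy is to restrict to a short enough sub-segment at the relevant endpoint --- which remains in $U$ by openness --- and then recover the full relation by concatenating curves, using nothing more than the elementary fact that $\ICpw$ is closed under concatenation. Everything else is bookkeeping.
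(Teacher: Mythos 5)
Your proof is correct and follows essentially the same route as the paper: show that $\krel_U$ satisfies push-up relative to the global $\ICpw$ by truncating the witnessing timelike curve to an initial (resp.\ final) segment inside $U$, apply push-up for $\krel_U$ there, and use transitivity (concatenation) of $\ICpw$, concluding by maximality of $\krel$. Your write-up merely spells out part (ii) and the junction-point detail that the paper leaves as ``analogous''.
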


\begin{proof}
 By definition, $\krel_U$ satisfies push-up  relative to $\ICpw_U$. We need to show that $\krel_U$ also satisfies push-up relative to $\ICpw$, and then the claim follows from maximality of $\krel$. Suppose that $(x,y) \in \krel_U$ and $(y,z) \in \ICpw$. Then there exists a timelike curve $\gamma : [0,1] \to M$ from $y$ to $z$. Since, by assumption, $y \in U$, we must have that for $\epsilon > 0$ small enough, $\gamma \vert_{[0,\epsilon]} \in U$. Thus we have $(y, \gamma(\epsilon)) \in \ICpw_U$, which implies $(x,\gamma(\epsilon)) \in \ICpw_U \subseteq \ICpw$ by definition of $\krel_U$. Since also $(\gamma(\epsilon), z) \in \ICpw$, transitivity of $\ICpw$ implies $(x,z) \in \ICpw$. Part (ii) of Definition \ref{pushup} can be shown analogously.
\end{proof}

We want to investigate whether, for $U$ small enough, $\krel_U$ is closed. The motivation lies in the limit curve theorems (see Remark \ref{limcurvthm} for the details). By Lemma \ref{kinI}, $\krel_U \subseteq \overline{\ICpw_U}$. Since also $\ICpw_U \subseteq \krel_U$, we conclude that $\krel_U$ is closed if and only if $\krel_U = \overline{\ICpw_U}$. By Definition \ref{kdef}, $\krel_U = \overline{\ICpw_U}$ if and only if $\overline{\ICpw_U}$ satisfies push-up. Unfortunately, the next example \cite[Ex.\ 3.1]{GKSS}, shows that the latter is not necessarily the case.

\begin{ex} \label{exnotclos}
 Let $M = \real^2$ with metric given by
 \[
  g_\alpha := -\sin 2\theta(x) dt^2 -2\cos 2\theta(x) dx dt + \sin 2\theta(x) dx^2
 \]
 where
 \[
  \theta(x) := \begin{cases}
                  0, &x < -1 \\
                  \arccos \vert x \vert^\alpha, &-1 \leq x \leq 0 \\
                  \frac{\pi}{2}, & x > 0,
                 \end{cases}
 \]
 and $0 < \alpha < 1$ arbitrary. The metric $g_\alpha$ is $\alpha$-Hölder continuous, and in fact smooth outside of $\{x=-1\} \cup \{x=0\}$. This example was introduced by Grant et al. \cite[Ex.\ 3.1]{GKSS}, who showed that $\ICpw \subsetneq \ILip$.
 
 Let $p = (0,0)$ and $U \subseteq M$ any open neighborhood of $p$. Then the following hold:
 \begin{enumerate}[label=(\roman*)]
  \item $\ILip_U$ is not open,
  \item $\krel_U$ is not closed.
 \end{enumerate}
 Point (i) is shown in \cite[Ex.\ 3.1]{GKSS} (they in fact show that $\ILip$ is not open, but their argument is also valid on neighborhoods).
 
 In order to show point (ii), first note that the past $\ICpwm(p)$ (blue region in Figure \ref{fig1}) is contained in $\{x>0\}$. This is so because a timelike $\Cpw$-curve must have timelike tangent vector everywhere, which implies having a positive $x$-component when in $\{ x \geq 0 \}$. When using $\Lip$-curves, the past set $I^-(p)$ is in fact bigger \cite[Ex.\ 3.1]{GKSS}, but we will not discuss this further.
 
 Consider the curve $\gamma : (- \epsilon,0) \to M, s \mapsto (t(s),x(s))$ given by
 \begin{align*}
  &t(s) := \frac{1}{1-\alpha} A^{1-\alpha} s , &x(s) := - A \vert s \vert^{\frac{1}{1-\alpha}} ,
 \end{align*}
 where $A>0$ is arbitrary and $\epsilon > 0$ is small. It is shown in \cite[Ex.\ 3.1]{GKSS} that $\gamma$ is a timelike curve in the $\Cpw$-sense (but its extension to the endpoint $s=0$ is not). Since $\gamma(s) \to p$ as $s \to 0$, we conclude that $(\gamma(-\epsilon'),p) \in \overline{\ICpw}$ for all $\epsilon' < \epsilon$.
 
 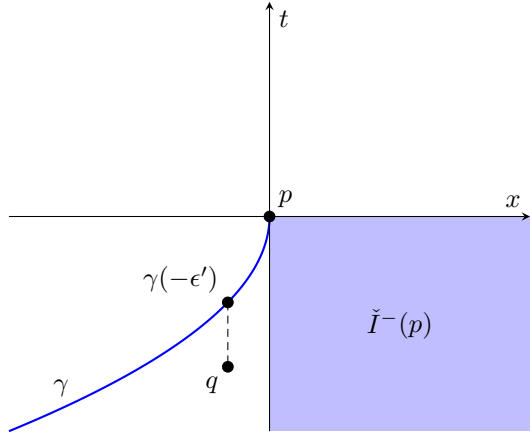
\begin{figure}
  \begin{center}
   \begin{tikzpicture}
    \begin{axis}[xmin=-1,xmax=1,
                 ymin=-2,ymax=2,
                 axis on top=true,
                 axis x line=middle,
                 axis y line=middle,
                 xlabel={$x$},
                 ylabel={$t$},
                 xtick=\empty,
                 ytick=\empty,]
     \fill[fill=blue!25] (axis cs: 0,0) rectangle (axis cs: 1,-2);
     \addplot[color=blue,style=thick,domain=-0.5:0,samples=50]({-4*x^2},{4*x});
     \filldraw[black] (axis cs: 0,0) circle (2pt) node[anchor=south west] {$p$};
     \filldraw[black] (axis cs: -0.16,-0.8) circle (2pt) node[anchor=south east] {$\gamma(-\epsilon')$};
     \filldraw[black] (axis cs: -0.16,-1.4) circle (2pt) node[anchor=north east] {$q$};
     \node at (axis cs: 0.5,-1) {$\ICpwm(p)$};
     \node at (axis cs: -0.8,-1.6) {$\gamma$};
     \draw[densely dashed] (axis cs: -0.16,-0.8) -- (axis cs: -0.16,-1.4);
    \end{axis}
   \end{tikzpicture}
  \end{center}
  \caption{The spacetime of Example \ref{exnotclos}, with the curve $\gamma$ that lies outside of $\ICpwm(p)$, while nonetheless $(\gamma(-\epsilon'),p) \in \overline{\ICpwp}$.} \label{fig1}
 \end{figure}

 Next we show that $(\gamma(-\epsilon'),p) \not\in \krel$. To see this, consider any point of the form $q = (x(-\epsilon'),t_q)$ with $t_q < t(-\epsilon')$ (see Figure \ref{fig1}). Then $(q,\gamma(-\epsilon')) \in \ICpw$, the connecting vertical segment being an example of timelike $\Cpw$-curve between $q$ and $\gamma(-\epsilon')$. If we assume $(\gamma(-\epsilon'),p) \in \krel$, then by push-up it follows that $(q,p) \in \ICpw$. However, $(q,p) \not\in \ICpw$ because $x(-\epsilon') < 0$ and $\ICpwm(p)$ is contained in $\{ x>0 \}$. Hence $(\gamma(-\epsilon'),p) \not\in \krel$.
 
 Since we can pick $\epsilon'>0$ arbitrarily small, and $t_q$ smaller but arbitrarily close to $t(-\epsilon')$, the previous discussion applies to any neighborhood $U$ of $p$. Thus $\krel_U \subsetneq \overline{\ICpw}_U$, no matter how we choose $U$.
\end{ex}

Grant et al. showed that in the previous example also $\ICpw \subsetneq \ILip$, and that $\ILip$ is not open (recall that $\ICpw$ is always open). If we were to define $\krel$ by requiring push-up with respect to $\ILip$ instead of $\ICpw$, it may be that $\krel_U$ is closed (for small enough $U$). We do not explore this possibility here, and simply note that this would be at the cost of chronological futures not being open. Hence the conclusion is, either way, that one cannot have push-up, open futures and (local) closedness at the same time. That is, at least, if one wants the chronological relation to be given by the usual $\ILip$ or $\ICpw$. We explore alternatives to $\ILip$ and $\ICpw$ in Section \ref{other}, but the conclusion there is also that one of the three properties has to be sacrificed.

\section{Causal curves in terms of the $\krel$-relation} \label{seckurv}

In this subsection, we define a variation on the $\krel$-relation, which we call $\kurv$, based on based on the notion of \emph{$\krel$-causal curves}. The motivation for this is two-fold. Firstly, we will show that $\kurv$, unlike $\krel$, has the desirable property that $\kurv = J^+$ on smooth spacetimes. Secondly, we are interested in studying the potential validty of limit curve theorems for $\krel$-causal curves. Throughout this section, $\interval$ denotes an interval, meaning any connected subset of $\real$.

\begin{defn} \label{defkloc}
 A continuous curve $\gamma : \interval \to M$ is called \emph{$\krel$-causal} if for every $t \in \interval$ and every open neighborhood $U \subseteq M$ of $\gamma(t)$, there exists an open neighborhood $V \subseteq \interval$ of $t$ such that
 \[
  s_1 < s_2 \implies \left(\gamma(s_1),\gamma(s_2)\right) \in \krel_U \ \text{ for all } \ s_1,s_2 \in V.
 \]
\end{defn}

\begin{rem} \label{gJcausal}
 Similarly to Definition \ref{defkloc}, one can define $J^+$-causal curves, as was done already by Hawking and Ellis in 1973 \cite[Chap.\ 6.2]{HaEl}, and $\Krel$-causal curves \cite[Def.\ 17]{SoWo}. Any $g$-causal curve is automatically also $J^+$-causal. However, the converse is not true, since a $J^+$-causal curve may not even have a well defined tangent vector. Nonetheless, if two points $p,q \in M$ can be joined by a $J^+$-causal curve $\gamma$, then $(p,q) \in J^+$. In particular, there exists a $g$-causal curve $\sigma$, not necessarily equal to $\gamma$, which joins $p$ and $q$.
\end{rem}

Similarly to the previous remark, by transitivity and Lemma \ref{kUink} it follows that if two points $p,q \in M$ can be joined by a $\krel$-causal curve, then $(p,q) \in \krel$. The next example motivates why Definition \ref{defkloc} has to be formulated in a local way, i.e.\ why we do not simply require $s_1 < s_2 \implies \left(\gamma(s_1),\gamma(s_2)\right) \in \krel$ for all $s_1,s_2 \in F$.

\begin{ex} \label{notsame}
 Let $M = S^1 \times \real$ with metric $ds^2 = -dt^2 + dx^2$. This spacetime is totally vicious, so $\ICpw = M \times M$ and hence also $\krel = M \times M$. Therefore, any $\Cdiff^0$-curve $\gamma: F \to M$ satisfies $\left(\gamma(s_1),\gamma(s_2)\right) \in \krel$ for all $s_1,s_2 \in F$. However, $M$ locally looks like Minkowski spacetime, where $\krel = J^+$, hence not all curves on $M$ are $\krel$-causal in the sense of Definition \ref{defkloc}.
\end{ex}

Having defined $\krel$-causal curves, we briefly return to Example \ref{exnotclos} in order to better understand the relationship between closedness of $\krel$ and limit curve theorems.

\begin{rem}[On limit curve theorems] \label{limcurvthm}
 Suppose that $(\gamma_n)_n$ is a sequence of $\krel$-causal curves converging pointwise to a $\Cdiff^0$-curve $\gamma_\infty: \interval \to M$. Suppose that for every $t \in \interval$, there exists a neighborhood $U \subseteq M$ of $\gamma_\infty(t)$ such that $\krel_U$ is closed. Then the curve $\gamma_\infty$ is $\krel$-causal, because any pair of points on $\gamma_\infty$ can be written as a limit of pairs of points on $\gamma_n$.
 
In Example \ref{exnotclos}, we showed that the point $p=(0,0)$ does not admit any neighborhood $U$ such that $\krel_U$ is closed. Let $\gamma$ be as in Example \ref{exnotclos}, and consider the sequence of curves given by $\gamma_n = \gamma \vert_{(-\epsilon, 1/n]}$. The sequence $(\gamma_n)_n$ converges pointwise (even uniformly, after appropriate reparametrization) to a curve $\gamma_\infty : (-\epsilon,0] \to M$ which is just $\gamma$ with $p$ added as its endpoint. However, we showed in Example \ref{exnotclos} that $(\gamma(t),p) \not\in \krel$ for all $-\epsilon < t < 0$, hence $\gamma_\infty$ is not $\krel$-causal.
\end{rem}

Moving on, we use $\krel$-causal curves to define a new causal relation on $M$.

\begin{defn}
 We define the $\kurv$-relation as follows: $(p,q) \in \kurv$ if there exists a $\krel$-causal curve from $p$ to $q$.
\end{defn}

By Lemma \ref{kUink} and transitivity of $\krel$, $\kurv \subseteq \krel$. In particular, $\kurv$ satisfies push-up relative to $\ICpw$. It is also clear that the concatenation of two $\krel$-causal curves is again $\krel$-causal, hence $\kurv$ is transitive. Example \ref{notsame} shows that $\kurv$ can be strictly smaller than $\krel$. We finish this section with one of the main results of the paper, namely that $\kurv = J^+$ on smooth (and more generally, causally plain) spacetimes. Recall that a $\Cdiff^0$-spacetime is called causally plain if $J^+$ satisfies push-up relative to $\ICpw$.

\begin{lem} \label{kurvinJ}
 Let $(M,g)$ be a $\Cdiff^0$-spacetime, and $\gamma : \interval \to M$ a $\krel$-causal curve. Then $\gamma$ is also $J^+$-causal.
\end{lem}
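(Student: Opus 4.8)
The plan is to reduce the claim to the set-theoretic inclusion $\krel_{U'} \subseteq J^+_{U}$ for suitably nested neighborhoods $\overline{U'} \subseteq U$, and then to prove that inclusion by combining Lemma \ref{kinI} with a limit curve argument. Concretely, I would fix $t \in \interval$ and an arbitrary open neighborhood $U$ of $p := \gamma(t)$, and, using local compactness of $M$, choose an open neighborhood $U'$ of $p$ with $\overline{U'}$ compact and $\overline{U'} \subseteq U$. Applying the definition of $\krel$-causal curve (Definition \ref{defkloc}) with the neighborhood $U'$ in place of $U$ yields an open neighborhood $V \subseteq \interval$ of $t$ such that $(\gamma(s_1),\gamma(s_2)) \in \krel_{U'}$ for all $s_1 < s_2$ in $V$. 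Hence, once we know $\krel_{U'} \subseteq J^+_{U}$, we obtain $(\gamma(s_1),\gamma(s_2)) \in J^+_{U}$ for all such $s_1 < s_2$; since $t$ and $U$ were arbitrary, this is precisely the statement that $\gamma$ is $J^+$-causal.

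To prove $\krel_{U'} \subseteq J^+_{U}$, I would take $(x,y) \in \krel_{U'}$; the case $x = y$ being trivial, assume $x \neq y$. Lemma \ref{kinI}, applied to the $\Cdiff^0$-spacetime $(U',g\vert_{U'})$, gives $(x,y) \in \overline{\ICpw_{U'}}$, so there exist $\Cpw$-timelike curves $\sigma_n \colon [0,1] \to U'$ with $\sigma_n(0) \to x$ and $\sigma_n(1) \to y$. Each $\sigma_n$ is in particular a $g$-causal curve, and they all lie in the compact set $\overline{U'} \subseteq U$; since $g$-causal curves confined to a compact region have uniformly bounded $h$-arclength for an auxiliary complete Riemannian metric $h$ (choose a chart around $p$ on which the time orientation is nearly $\partial_{x^0}$, so that $x^0$ is monotone along $g$-causal curves and the spatial slopes are uniformly controlled), the limit curve theorem for continuous metrics (\cite[Thm.\ 1.6]{ChGr}; see also \cite[Thm.\ 1.5]{Sae}) produces a subsequence converging uniformly, after reparametrisation, to a future-directed $g$-causal curve $\sigma$ from $x$ to $y$. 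As $\overline{U'}$ is closed, $\sigma$ stays inside $\overline{U'} \subseteq U$, whence $(x,y) \in J^+_{U}$.

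The main obstacle is this limit curve step: one must guarantee that the approximating timelike curves neither degenerate to a point nor run off towards the boundary, which is exactly the reason the argument is carried out on the precompact $U'$ rather than on $U$ itself, and one must invoke the form of the limit curve theorem that yields a causal limit curve with the prescribed endpoints from a sequence trapped in a compact set. The remaining ingredients — the choice of nested neighborhoods, the a priori bound on the length of causal curves in a compact region, and the reparametrisation bookkeeping — are routine. One should also not overlook the trivial case of a curve constant on a subinterval, which is dispatched by the convention that constant curves are (trivially) $J^+$-causal.
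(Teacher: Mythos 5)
Your proposal is correct in substance but takes a genuinely different route from the paper. The paper's proof fixes $t$, chooses for $U$ a \emph{cylindrical neighborhood} of $p=\gamma(t)$ in the sense of Chru\'{s}ciel--Grant \cite[Prop.\ 1.10]{ChGr}, whose defining property is $\overline{\ICpwpm_U(p)} \subseteq J^\pm_U(p)$, and then applies push-up to a short timelike curve emanating from $\gamma(s)$ to place $\gamma(s)$ in $\overline{\ICpwpm_U(p)}$, hence in $J^\pm_U(p)$; no limit-curve machinery is needed, since the compactness is already packaged into the cited proposition. You instead reprove (a two-point version of) that cylindrical-neighborhood property by hand: Lemma \ref{kinI} applied to $(U',g\vert_{U'})$ supplies approximating $\Cpw$-timelike curves (in fact, its proof gives curves starting exactly at $x$ with endpoints converging to $y$), and an Arzel\`a--Ascoli/limit-curve argument in the precompact $\overline{U'}$ upgrades $\overline{\ICpw_{U'}}$ to $J^+_U$. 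What this buys is a slightly stronger local statement (all pairs $s_1<s_2$ in $V$ are $J^+_U$-related, not only pairs containing $t$) and independence from the specific structure of cylindrical neighborhoods, at the cost of invoking the limit curve theorem; the paper's argument is shorter precisely because it outsources that step. One point you should make explicit rather than parenthetical: precompactness of $U'$ alone does not yield the uniform $h$-length bound needed for Arzel\`a--Ascoli --- you must shrink $U'$ so that the light cones of $g$ on $\overline{U'}$ are contained in the cones of a flat comparison metric (i.e.\ essentially take $U'$ inside a cylindrical neighborhood), which is what makes the coordinate time uniformly monotone along causal curves; ``the time orientation is nearly $\partial_{x^0}$'' is not by itself the right condition. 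With that adjustment the argument goes through; the degenerate case $\gamma(s_1)=\gamma(s_2)$ rests on the convention that $p\in J^+_U(p)$, a convention the paper's own proof also uses implicitly.
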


\begin{proof}
 Let $t \in \interval$ be arbitrary. By \cite[Proposition 1.10]{ChGr}, there exists a neighborhood $U$ of $p := \gamma(t)$, called a cylindrical neighborhood, such that $\overline{\ICpwpm_U(p)} \subseteq J_U^\pm(p)$ (here we mean the closure of the set $\ICpwpm_U(p) \subseteq U$). Let $V \in \interval$ be a neighborhood of $t$ as in Definition \ref{defkloc}, and $s \in V$. Suppose $s \geq t$, the other case being analogous. Because $\gamma$ is $\krel$-causal, we have $\left(p,\gamma(s)\right) \in \krel_U$. Let $\sigma : [0,\epsilon) \to U$ be any timelike $\Cpw$-curve with $\sigma(0) = \gamma(s)$. By push-up, $\ima(\sigma) \subseteq \ICpwp\left(p\right)$. By continuity, $\sigma(u) \to \gamma(s)$ as $u \to 0$. Hence, by the previous and our choice of $U$, $\gamma(s) \in \overline{\ICpwp_U(p)} \subseteq J_U^+(p)$. In other words, $(\gamma(t),\gamma(s)) \in J_U^+$. Since $t, s$ are arbitrary (as long as they are close enough), we conclude that $\gamma$ is $J^+$-causal.
\end{proof}

\begin{thm}
 Let $(M,g)$ be a causally plain $\Cdiff^0$-spacetime. Then $\kurv = J^+$.
\end{thm}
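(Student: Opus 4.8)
The plan is to prove the two inclusions $\kurv \subseteq J^+$ and $J^+ \subseteq \kurv$ separately; only the second one will actually use causal plainness.

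For $\kurv \subseteq J^+$, suppose $(p,q) \in \kurv$, so that there is a $\krel$-causal curve $\gamma$ from $p$ to $q$. By Lemma \ref{kurvinJ}, $\gamma$ is then $J^+$-causal, and by the observation recalled in Remark \ref{gJcausal} any two points joined by a $J^+$-causal curve are related by $J^+$. Hence $(p,q) \in J^+$, and this step works on an arbitrary $\Cdiff^0$-spacetime.

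For $J^+ \subseteq \kurv$, take $(p,q) \in J^+$ and a $g$-causal curve $\gamma : [a,b] \to M$ with $\gamma(a) = p$, $\gamma(b) = q$. Since $(p,q) \in \kurv$ as soon as $\gamma$ is $\krel$-causal, it suffices to show that \emph{every} $g$-causal curve is $\krel$-causal. Fix $t \in [a,b]$ and an open neighbourhood $U$ of $\gamma(t)$; note that $(U,g\vert_U)$ is again a $\Cdiff^0$-spacetime with the restricted time orientation. The key step is the inclusion $J^+_U \subseteq \krel_U$: if $(U,g\vert_U)$ is again causally plain, then $J^+_U$ satisfies push-up relative to $\ICpw_U$, so $J^+_U \subseteq \krel_U$ by the maximality built into Definition \ref{kdef}, now applied to $(U,g\vert_U)$. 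Granting this, choose by continuity of $\gamma$ an open neighbourhood $V \subseteq [a,b]$ of $t$ with $\gamma(V) \subseteq U$. Then for any $s_1 < s_2$ in $V$ the restriction $\gamma\vert_{[s_1,s_2]}$ is a $g$-causal curve contained in $U$, hence $(\gamma(s_1),\gamma(s_2)) \in J^+_U \subseteq \krel_U$; by Definition \ref{defkloc} this means $\gamma$ is $\krel$-causal, and we are done.

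The main obstacle is therefore the localization statement ``$(M,g)$ causally plain $\implies$ $(U,g\vert_U)$ causally plain'', equivalently $J^+_U \subseteq \krel_U$. I see two ways to handle it. One can invoke the characterization of causal plainness due to Chru\'{s}ciel and Grant in terms of the vanishing of the ``causal bubble'' inside cylindrical neighbourhoods: this is a purely local condition and is therefore inherited by open subsets. Alternatively, and this is the route I would write out, one uses the set-theoretic description of $\krel_U$ extracted from the proof of Proposition \ref{KP}, namely that $(x,y) \in \krel_U$ if and only if $\ICpw_U(y) \subseteq \ICpw_U(x)$ and $\ICpwm_U(x) \subseteq \ICpwm_U(y)$; one then checks these two inclusions for $x = \gamma(s_1)$ and $y = \gamma(s_2)$ directly, by concatenating the $g$-causal arc $\gamma\vert_{[s_1,s_2]}$ with a timelike $\Cpw$-curve lying in $U$ and invoking the standard push-up deformation, which is supported near the break point and can hence be kept inside $U$. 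Either way, this is exactly the place where causal plainness of $M$ enters; once it is secured, the rest is bookkeeping with Definitions \ref{kdef} and \ref{defkloc}.
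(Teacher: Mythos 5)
Your proposal is correct and follows essentially the same route as the paper: $\kurv \subseteq J^+$ via Lemma \ref{kurvinJ} together with the observation in Remark \ref{gJcausal}, and $J^+ \subseteq \kurv$ by showing that every $g$-causal curve is $\krel$-causal through the local inclusion $J^+_U \subseteq \krel_U$. You are in fact more careful than the paper on the one delicate point: the paper simply asserts ``in particular, $J^+_U \subseteq \krel_U$'' from causal plainness, whereas you correctly flag that this requires localizing causal plainness to $(U,g\vert_U)$, and your route (a), via the Chru\'{s}ciel--Grant cylindrical-neighborhood (no-bubbling) characterization, which is a local and hence hereditary condition, is the right justification. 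The only inaccuracy is in your preferred route (b): the push-up deformation of the concatenation is \emph{not} supported near the break point (if the arc $\gamma\vert_{[s_1,s_2]}$ is null it must be deformed along its entire length to become timelike); what makes the argument work is that on causally plain spacetimes the push-up can be performed within an arbitrarily small neighborhood of the concatenated curve, by covering it with cylindrical neighborhoods contained in $U$ --- which again reduces to route (a).
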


\begin{proof}
 If $(M,g)$ is causally plain, then $J^+$ satisfies push-up, hence $J^+ \subseteq \krel$. In particular, on a subset $U \subset M$, we have $J_U^+ \subseteq \krel_U$. Assume $(p,q) \in J^+$. Then there exists a $g$-causal curve $\gamma:[a,b] \to M$ from $p$ to $q$. By continuity, for every $t \in [a,b]$ and every neighborhood $U$ of $\gamma(t)$, there exists a neighborhood $V \subseteq [a,b]$ of $t$ small enough such that $\gamma \vert_V$ is contained in $U$. If $s_1,s_2 \in V$ and $s_1 < s_2$, then $\left(\gamma(s_1),\gamma(s_2)\right) \in J^+_U \subseteq \krel_U$. Thus $\gamma$ is a $\krel$-causal curve, and since $\gamma$ is arbitrary, we conculde that $J^+ \subseteq \kurv$. The other inclusion follows from Lemma \ref{kurvinJ}, by noting that if two points $p,q$ can be joined by a $J^+$-causal curve, then $(p,q) \in J^+$.
\end{proof}

\section{Other causal structures} \label{other}

It is possible to repeat the procedure of Section \ref{seckurv} for Sorkin and Woolgar's $\Krel$, and define a relation $\Kurv$ based on $\Krel$-causal curves (the latter class of curves is also studied in \cite[Section 3]{SoWo}). On a smooth spacetime, every point admits an arbitrarily small neighborhood $U$ (a convex normal neighborhood) such that $J^+_U = \overline{\ICpw_U}$. Thus we conclude that on smooth spacetimes, $\Kurv = J^+$. Unfortunately, Example \ref{exnotclos} and Remark \ref{limcurvthm} tell us that $\Kurv$ cannot satisfy push-up with respect to $\ICpw$ on all $\Cdiff^0$-spacetimes. That is because, if it did, then $\Kurv \subseteq \krel$. Recall that $\Krel$ is closed and contains $\ICpw$. It follows that if a curve $\gamma$ is the limit of a sequence of $\Cpw$-timelike curves, then $\gamma$ must be $\krel$-causal. However, in Remark \ref{limcurvthm}, we saw an example of such a $\gamma$ where the endpoints are not $\krel$-related to each other, a contradiction.

A different approach is to consider $J^+$ to be the more fundamental relation, and then find an appropriate notion of chronological order, say $\Ialt$. Ideally, we would like all of the following three properties to hold.
\begin{enumerate}[label=(\roman*)]
 \item $\Ialt$ is open and contained in $J^+$.
 \item $J^+$ satisfies push-up relative to $\Ialt$.
 \item For every point $p \in M$ and every neighborhood $U$ of $p$, $\Ialtpm(p) \bigcap U \neq \emptyset$.
\end{enumerate}

\begin{prop}
 Given $J^+$, if there exists a relation $\Ialt$ satisfying (i), (ii) and (iii) above, then $\Ialt = \Int J^+$.
\end{prop}

\begin{proof}
 By property (i), it is clear that $\Ialt \subseteq \Int J^+$. Suppose there exists $(p,q) \in \Int J^+ \setminus \Ialt$. By propery (iii), with $U$ a neighborhood of $q$, we can find $r \in U \cap \Ialtm(q)$. We may choose $U$ small enough so that $\{p\} \times U \subseteq \Int J^+$, meaning in particular that $(p,r) \in J^+$. Then, by property (ii), we have that $(p,q) \in \Ialt$, obtaining a contradiction. Therefore $\Ialt = \Int J^+$.
\end{proof}

Next, we present an example where $J^+$ does not satisfy push-up relative to $\Int J^+$ (point (ii)). In view of the previous proposition, we conclude that there cannot exist a relation $\Ialt$ satisfying (i)-(iii) above.

\begin{ex} \label{exsec4} 
This example is adapted from \cite[Ex.\ 1.11]{ChGr} and \cite[Sec.\ 4.1]{Ling}. Let $M = (-2,2) \times \real$ with metric given by
 \[
  ds^2 = -dt^2 - 2(1- \vert t \vert^{1/2})dtdx + \vert t \vert^{1/2} (2 - \vert t \vert^{1/2})dx^2.
 \]
 This metric is smooth everywhere except on the $x$-axis. A null curve starting at the point $p = (-1,0)$ can be parametrized as $x \mapsto \gamma(x) = (t(x),x)$, and then
 \[
  \dot{t} = \begin{cases}
             \vert t \vert^{1/2} &\text{or} \\
             \vert t \vert^{1/2} - 2. &
            \end{cases}
 \]
 By solving this equation, we obtain the boundary of $J^+(p)$ (light blue region in Figure \ref{fig2}). Consider the first case of the equation, which is when the null curve $\gamma$ moves upwards and to the right. We are interested in finding the value $x_1$ such that $t(x) \to 0$ as $x \to x_1$. It can easily be computed by separation of variables:
 \[
  x_1 = \int^{x_1}_{0} dx = \int^0_{-1} \frac{dt}{\vert t \vert^{1/2}} = 2.
 \]
 Let $q = (0,0)$, then $(p,q) \in \Int J^+$. Consider a third point $r = (0,3)$, as in Figure \ref{fig2}. Now $(q,r) \in J^+$, because the curve $x \mapsto (0,x)$ is null. However, $(p,r) \not\in \Int J^+$, since there are points of the form $(-\epsilon, 3)$ arbitrarily close to $r$ which cannot lie in $J^+(p)$ because they lie below the $x$-axis and their $x$-coordinate is larger than $2$. Hence $J^+$ does not satisfy push-up relative to $\Int J^+$ in this example.
  \begin{figure}
  \begin{center}
   \begin{tikzpicture}
    \begin{axis}[xmin=-4,xmax=4,
                 ymin=-1.5,ymax=1.5,
                 axis on top=true,
                 axis x line=middle,
                 axis y line=middle,
                 xlabel={$x$},
                 ylabel={$t$},
                 xtick=\empty,
                 ytick=\empty,]
     \addplot[color=blue!15,name path =A,style=thick,domain=0:1,samples=50]({2-2*sqrt(x)},{-x});
     \addplot[color=blue!15,name path =B,style=thick,domain=0:1,samples=50]({-2*x-4*ln(2-x)+2},{-(x^2)});
     \addplot[color=blue!15,name path =C,style=thick,domain=0:1.5,samples=50]({2*x+4*ln(2-x)+2-8*ln(2)},{(x^2)});
     \addplot[color=blue!30,name path =D,style=thick,domain=0:1.5,samples=50]({2*x+4*ln(2-x)-4*ln(2)},{(x^2)});
     \path[name path =E] (axis cs: 0,0) -- (axis cs: 4,0);
     \path [name path=F] (axis cs: 0,1.5) -- (axis cs: 4,1.5);
     \path [name path=G] (axis cs: -2,1.5) -- (axis cs: 0,1.5);
     \filldraw[black] (axis cs: 0,0) circle (2pt) node[anchor=south west] {$q$};
     \filldraw[black] (axis cs: 0,-1) circle (2pt) node[anchor=north west] {$p$};
     \filldraw[black] (axis cs: 3,0) circle (2pt) node[anchor=south] {$r$};
     \addplot [blue!15] fill between [of=A and E];
     \addplot [blue!15] fill between [of=B and D];
     \addplot [blue!15] fill between [of=C and D];
     \addplot [blue!30] fill between [of=D and G];
     \addplot [blue!30] fill between [of=E and F];
     \node at (axis cs: 2,0.75) {$J^+(q)$};
     \node[style={rectangle, draw=black}] (jp) at (axis cs: 2,-0.75) {$J^+(p)$};
     \draw[->] (jp.north west) -- (axis cs: 0.6,-0.2);
    \end{axis}
   \end{tikzpicture}

  \end{center}
 \caption{The points $p,q,r$ in Example \ref{exsec4}, which satisfy $(p,q) \in \Int J^+$ and $(q,r) \in J^+$ but $(p,r) \not\in \Int J^+$.} \label{fig2}
 \end{figure}
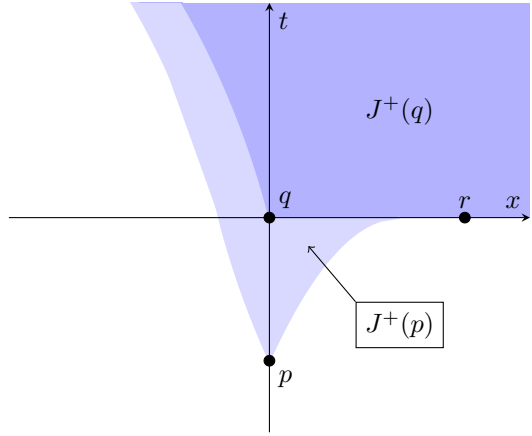
\end{ex}

Finally, we point out yet another option, which is to define chronological futures via the Lorentzian distance. We recall the basic properties of the Lorentzian distance, and refer to \cite[Chap.\ 4]{BEE} for further details. The length of a causal curve $\gamma:[a,b]\to M$ is defined as
\[
 L_g(\gamma) := \int_a^b \sqrt{-g(\dot\gamma,\dot\gamma)} ds,
\]
where it is enough for the integrand to be defined almost everywhere. The Lorentzian distance between two points $p,q \in M$ is then defined by
\begin{equation} \label{defd}
 d(p,q) := \sup \left\{L_g(\gamma) \mid \gamma \text{ a $g$-causal curve with endpoints $p,q$} \right\}
\end{equation}
if $(p,q) \in J^+$, and $d(p,q) := 0$ otherwise. Recall that when the spacetime metric $g$ is smooth, the Lorentzian distance $d$ satisfies $d(p,q) > 0 \iff (p,q) \in \ICpw$ \cite[Eqn.\ 4.2]{BEE}. On $\Cdiff^0$-spacetimes, only the ``$\impliedby$'' implication continues to hold. To see how ``$\implies$'' can fail, take the points $p,q,r$ in Example \ref{exsec4} (depicted in Figure \ref{fig2}). We can connect $p$ and $q$ by a vertical segment, which is timelike (hence causal) and has length equal to $1$. We can connect $q$ and $r$ by a horizontal segment, which is also causal, and has length $0$. Concatenating the two segments, we get a causal curve of length $1$ from $p$ to $r$, despite the fact that $(p,r) \not\in \ICpw$. Further, we see that $r \in \partial J^+(p)$, and since $\{d>0\} \subseteq J^+$ by definition, we conclude that $\{d>0\}$ is not open in this example. Nonetheless, another property of the Lorentzian distance, the inverse triangle inequality
\[
 d(p,r) \geq d(p,q) + d(q,r) \ \text{ if } (p,r),(r,q) \in J^+,
\]
does hold for all $\Cdiff^0$-metrics, since it is a direct consequence of \eqref{defd} and the fact that the concatenation of two causal curves is causal. We deduce from the inverse triangle inequality that $J^+$ satisfies push-up relative to $\{ d>0 \}$. Hence the combination of $J^+$ and $\{d>0\}$ gives us push-up and limit curve theorems, but at the price of non-open futures.

\section{Conclusions} \label{conc}

Table \ref{table} summarizes the properties of different choices of causal and chronological relation on $\Cdiff^0$-spacetimes. While each of the choices (rows) is distinct from the others for $\Cdiff^0$-metrics, they all coincide for smooth metrics. In particular, in the smooth case, the standard causal structure ticks all three boxes. For $\Cdiff^0$-metrics, on the other hand, no combination of chronological and causal order has all of the three properties that we considered. The newly introduced $(\ICpw,\kurv)$ is the only causal structure that has both push-up and an open chronological relation. Moreover, it defines a causal structure in the sense of Kronheimer and Penrose (see Proposition \ref{KP}).

\begin{table}\caption{Comparison of different causal structures on $\Cdiff^0$-spacetimes by three important properties. The (non-)openness of $\ICpw$ and $\ILip$ was established by Grant et al. \cite{GKSS}, and the push-up and limit curve properties of $J^+$ by Chru\'{s}ciel and Grant \cite{ChGr}. The rest of entries in the table were, to the best of our knowledge, never explicitly stated before.} \label{table}
\begin{center}
\begin{tabular}{ | c c | c c c | } 
 \hline
 Chronological & Causal & Push-up & Open & Limit \\
 order & order & & futures & curves \\
 \hline
 & & & & \\[-2ex]
 $\ICpw$ & $J^+$    & \xmark & \cmark & \cmark \\[1ex]
 $\ICpw$ & $\Kurv$  & \xmark & \cmark & \cmark \\[1ex]
 $\ICpw$ & $\kurv$  & \cmark & \cmark & \xmark \\[1ex]
 $\ILip$ & $J^+$    & \xmark & \xmark & \cmark \\[1ex]
 $\ILip$ & $\Kurv$ & \textbf{?} & \xmark & \cmark \\[1ex]
 $\ILip$ & $\kurv$  & \cmark & \xmark & \textbf{?} \\[1ex]
 $\{d>0\}$   & $J^+$    & \cmark & \xmark & \cmark \\[1ex]
 $\Int J^+$ & $J^+$ & \xmark & \cmark & \cmark \\[1ex]
 \hline
\end{tabular}
\end{center}

\end{table}

It is fair to say that we have exhausted all reasonable possibilities. For if we want the chronological relation to be given by timelike $\Cpw$-curves, then Example \ref{exnotclos} and Remark \ref{limcurvthm} tell us that no causal relation can satisfy push-up and at the same time admit a limit curve theorem. We do not know if this changes when using timelike $\Lip$-curves, but even if so, we would loose the openness of chronological futures instead \cite{GKSS}. If, on the other hand, we choose for the causal relation to be given by $g$-causal curves, and try to define a compatible chronological relation, we run into the same problems by the discussion in Section \ref{other}.

\printbibliography

\end{document}